\RequirePackage[l2tabu, orthodox]{nag} 
\documentclass[UTF8]{article}
\usepackage[marginratio=1:1,height=584pt,width=400pt,tmargin=117pt]{geometry} 
\geometry{a4paper,margin=2cm}

\usepackage[english]{babel}

\usepackage{graphicx}
\usepackage{float}  
\usepackage{booktabs}
\usepackage{tikz-cd}
\usepackage{tikz}
\usetikzlibrary{decorations.markings,positioning}
\usetikzlibrary{shadows.blur}
\usepackage{standalone}
\usepackage[font=small]{caption}
\usepackage{subcaption}

\usepackage{footmisc}

\usepackage[page,toc,titletoc,title]{appendix}

\usepackage[backend=biber, 
    style=numeric-comp,
   sorting=none]{biblatex}
\addbibresource{refs.bib}
\usepackage{biblatex} 

\usepackage{slashed}
\usepackage{faktor}
\usepackage{stackrel}
\usepackage{enumitem}
\usepackage{bbold}
\usepackage{csquotes}
\usepackage{ dsfont }
\usepackage{listings}
\raggedbottom
\usepackage{microtype} 
\usepackage{blindtext}
\usepackage{enumitem}

\usepackage{amsthm}
\usepackage{amsmath,amsfonts,amssymb,amsthm,epstopdf,titling,url,array,bm}
\usepackage{xfrac}
\usepackage{braket}
\usepackage{authblk}
\usepackage{hyperref}
\usepackage{cleveref} 

\usepackage{todonotes}

\newtheorem{theorem}{Theorem}
\theoremstyle{plain}

\newtheorem{lemma}{Lemma}

\theoremstyle{definition}

\newtheorem{assumption}{Assumption}

\usepackage{xcolor}
\usepackage{todonotes}

\usepackage{soul}


\title{Comment on a no-go theorem for $\psi$-ontic models}
\author{Laurens Walleghem, Shashaank Khanna, Rutvij Bhavsar \\ 
\href{mailto:laurens.walleghem@york.ac.uk}{laurens.walleghem@york.ac.uk}, \href{mailto:shashaank.khanna@york.ac.uk}{shashaank.khanna@york.ac.uk}, \href{mailto:rutvij@kaist.ac.kr}{rutvij@kaist.ac.kr}}
\affil{Department of Mathematics,  University of York, Heslington, York, YO10 5DD, United Kingdom}
\date{\today}

\begin{document}

\maketitle

\begin{abstract}
    In a recent paper [Carcassi, Oldofredi and Aidala, Found Phys 54, 14 (2024)] it is claimed that the whole Harrigan--Spekkens framework of ontological models is inconsistent with quantum theory. They show this by showing that all pure quantum states in $\psi$-ontic models must be orthogonal. In this note, we identify some crucial mistakes in their argument to the extent that the main claim is incorrect.
\end{abstract}

\section{Introduction}
In this short note, we consider the result of \cite{carcassi2022reality}, where grand claims on the nature of the quantum state are made. It is claimed that the Harrigan-Spekkens \cite{harrigan2010einstein} categorisation of ontic vs epistemic states itself is problematic. The authors of \cite{carcassi2022reality} claim that for $\psi$-ontic models all pure states in quantum mechanics must be orthogonal, in stark violation of quantum mechanics. Hence, they conclude that their result shows that even $\psi$-ontic models cannot reproduce Quantum Mechanics. An immediate contradiction to their result is the existence of the de Broglie- Bohm theory which is $\psi$-ontic and reproduces Quantum Mechanics. Together with their result and the PBR theorem \cite{pusey2012reality}, which suggests that the wavefunction in quantum mechanics is ontic, the authors of \cite{carcassi2022reality}  conclude that it is the Harrigans-Spekkens framework itself that is wrong. 
As we will argue, three points weaken their argument to the point that their conclusion does not hold together. Namely, \begin{itemize}
    \item it is assumed that the entropy of the quantum state equals the entropy of the underlying ontological distribution, which need not necessarily be the case,
    \item even if we do assume that the entropy of the quantum state should be the same as the entropy of the underlying ontological distribution (which as mentioned is meaningless), their result would still restrict to $\psi$-complete models and does not capture \emph{all} $\psi$-ontic models, 
    \item the claim (and its proof) of \cite{carcassi2022reality} that `no ontological model can reproduce quantum theory' is wrong. The research paradigm of generalised noncontextuality \cite{spekkens2005contextuality,spekkens2014status,schmid2018all,schmid2020structure,schmid2021characterization,mazurek2016experimental,kunjwal2019beyond,chaturvedi2021characterising,schmid2023addressing} and related establishes that rather `any ontological model for quantum theory needs fine-tuning'.
\end{itemize}

\section{Recap of the argument of \cite{carcassi2022reality}}

We shortly recap the argument of \cite{carcassi2022reality} here and state the assumptions upon which we comment in the next section. 

Two non-orthogonal quantum states $|\psi\rangle, |\phi\rangle$ are considered, and the entropy of their mixture $\rho = \frac{1}{2} | \psi \rangle \langle \psi | + \frac{1}{2} |\phi \rangle \langle \phi |$ is calculated in two ways, namely (i) using the von Neumann entropy for quantum states and (ii) using the ontological $\psi$-ontic representation of those quantum states and assuming convexity-preserving. Using the fact that $\rho$ can be diagonalised as $\rho=\frac{1+|\langle\psi \mid \phi\rangle|}{2}|+\rangle\langle+|+\frac{1-|\langle\psi \mid \phi\rangle|}{2}|-\rangle\langle-|$ for some basis $| \pm \rangle$ the von Neumann entropy is given by  \begin{equation} \label{eq:vonNeumann}
\begin{split}
    H(\rho) &= - \operatorname{Tr} \big[ \rho \log \rho \big] =H\left(\frac{1+|\langle\psi \mid \phi\rangle|}{2}, \frac{1-|\langle\psi \mid \phi\rangle|}{2}\right) \\ 
    &=-\frac{1+\sqrt{p}}{2} \log \frac{1+\sqrt{p}}{2}-\frac{1-\sqrt{p}}{2} \log \frac{1-\sqrt{p}}{2},
\end{split}
\end{equation} with $p = |\langle \psi | \phi \rangle|^2$. On the other hand, consider an ontological model for quantum theory, that is $\psi$-ontic such that $\mu(\lambda|\psi)$ and $\mu(\lambda|\phi)$ have nonoverlapping support $U_\psi$ and $U_\phi$. For ease of notation, we define $\mu_{P_{\sigma}}(\lambda) := \mu(\lambda|P_\sigma)$ for a density operator $\sigma$. Furthermore, the following is assumed, as stated in eq. (10) in \cite{carcassi2022reality}. 
\begin{assumption} \label{ass:convex-preserving}
    The ontological model preserves convex combinations of preparations, i.e. if density operator $\rho$ with preparation $P_\rho$ is obtained by sampling between preparing $P_\psi$ and $P_\phi$ with equal probability, then their ontological representations satisfy \begin{equation}
    \mu_{P_{\rho}} = \frac{1}{2}\mu_{P_{\psi}} + \frac{1}{2} \mu_{P_{\phi}}.
\end{equation}  
\end{assumption} 
For an ontic distribution $\mu_{P_{\psi}}(\lambda)$, the authors then define the following: 
\begin{eqnarray}
H_{\mathrm{\psi}}(\mu_{p_{\psi}}) := \int_{\Lambda} \mu_{P_{\psi}}(\lambda) \log(\mu_{P_{\psi}}(\lambda)) 
\end{eqnarray}
where $\log$ are is assumed to be base two. 
Calculating the entropy $H_{\mathrm{ontic}}$ of the ontological distribution $\mu_{P_{\rho}}$, one finds
\begin{equation} \label{eq:Shannon1}
\begin{aligned}
H_{\mathrm{ontic}}(\mu_{P_{\rho}}) & =-\int_{\Lambda} \mu_{P_{\rho}} \log \mu_{P_{\rho}} d \lambda \\
& =-\int_{U_\phi} \mu_{P_{\rho}} \log \mu_{P_{\rho}} d \lambda-\int_{U_\psi} \mu_{P_{\rho}} \log \mu_{P_{\rho}} d \lambda \\
& =-\int_{U_\psi} \frac{1}{2} \mu_{P_{\psi}} \log \frac{1}{2} \mu_{P_{\psi}} d \lambda-\int_{U_\phi} \frac{1}{2} \mu_{P_{\phi}} \log \frac{1}{2} \mu_{P_{\phi}} d \lambda \\
& =-\frac{1}{2} \int_{U_\psi} \mu_{P_{\psi}} \log \frac{1}{2} d \lambda-\frac{1}{2} \int_{U_\psi} \mu_{P_{\psi}} \log \mu_{P_{\psi}} d \lambda \\
& -\frac{1}{2} \int_{U_\phi} \mu_{P_{\phi}} \log \frac{1}{2} d \lambda-\frac{1}{2} \int_{U_\phi} \mu_{P_{\phi}} \log \mu_{P_{\phi}} d \lambda \\
& =-\frac{1}{2} \log \frac{1}{2}-\frac{1}{2} \log \frac{1}{2}+\frac{1}{2} H\left(\mu_{P_{\psi}}\right)+\frac{1}{2} H_{\mathrm{ontic}}\left(\mu_{P_{\phi}}\right) \\
& =1+\frac{1}{2} H_{\mathrm{ontic}}\left(\mu_{P_{\psi}}\right)+\frac{1}{2} H_{\mathrm{ontic}}\left(\mu_{P_{\phi}}\right) .
\end{aligned}
\end{equation} In the first and second line, we used that $\mu_{P_{\rho}}$ can be written as a convex combination of $\mu_{P_{\psi}}$ and $\mu_{P_{\phi}}$, which have nonoverlapping ontological support $U_\psi$ and $U_\phi$. Next, the authors assume that as pure states have zero entropy that this holds on the ontological level as well, i.e. \begin{equation}
H(\mu_{P_{\psi}}) = 0 = H(\mu_{P_{\phi}}).
\end{equation}  
\begin{assumption} \label{ass:pure_zero_entropy}
    For any pure quantum state $\psi$ we have $H_{\mathrm{ontic}}(\mu_{P_{\psi}})=0$. 
\end{assumption} 
As we argue below in \Cref{sec:psi-completeness} this implies $\psi$-completeness. We find using \cref{eq:Shannon1} that 
\begin{equation}
    H_{\mathrm{ontic}}(\mu_{P_{\rho}}) = 1,
\end{equation} which does not equal the von Neumann entropy for the quantum state $\rho$, given in \cref{eq:vonNeumann} as $|\psi\rangle$ and $|\phi\rangle$ are nonorthogonal, i.e. $p = |\langle \psi |\phi \rangle|^2 \neq 0$. The authors argue that this is contradicting, as they assume the von Neumann entropy of a quantum state should equal the ontological entropy, namely they assume the following. 
\begin{assumption} \label{ass:entropy_vN_equals_ontological}
    The entropy of a quantum state equals the entropy of its ontological representation. 
\end{assumption}

\vspace{0.5cm}

We can summarise these findings in the following theorem, the main result of \cite{carcassi2022reality} but now stating the (hidden) assumptions.

\begin{theorem} \label{th:carcassi2022reality}
    A $\psi$-complete model for quantum theory where for quantum states the ontological Shannon entropy equals the von Neumann entropy of quantum theory cannot be ontological for convexity-preserving preparations.
\end{theorem}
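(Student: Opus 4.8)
The ingredients have essentially all been laid out above, so the proof is an assembly: fix an arbitrary pair of non-orthogonal pure states $|\psi\rangle,|\phi\rangle$ with $p = |\langle\psi|\phi\rangle|^2 > 0$, compute the entropy of $\rho = \tfrac{1}{2}|\psi\rangle\langle\psi| + \tfrac{1}{2}|\phi\rangle\langle\phi|$ in two ways, and show the two values are forced to disagree. First I would treat the quantum side: diagonalising $\rho$ in the two-dimensional subspace it spans gives eigenvalues $\tfrac{1\pm|\langle\psi|\phi\rangle|}{2}$, so $H(\rho)$ is the binary entropy of these, i.e. exactly \cref{eq:vonNeumann}. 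The key quantitative fact I would extract here is that the binary entropy $H(q,1-q)$ attains the value $1$ only at $q = \tfrac{1}{2}$, and $q = \tfrac{1+\sqrt p}{2} = \tfrac{1}{2}$ iff $p = 0$; hence $H(\rho) < 1$ strictly whenever $|\psi\rangle,|\phi\rangle$ are non-orthogonal.

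Next I would treat the ontological side. By $\psi$-onticity, $\mu_{P_\psi}$ and $\mu_{P_\phi}$ have disjoint supports $U_\psi, U_\phi$; by \Cref{ass:convex-preserving} we have $\mu_{P_\rho} = \tfrac{1}{2}\mu_{P_\psi} + \tfrac{1}{2}\mu_{P_\phi}$, which restricts to $\tfrac{1}{2}\mu_{P_\psi}$ on $U_\psi$ and to $\tfrac{1}{2}\mu_{P_\phi}$ on $U_\phi$. Splitting the Shannon integral over the disjoint union $U_\psi \sqcup U_\phi$ and expanding $\log\!\big(\tfrac{1}{2}\mu\big) = \log\tfrac{1}{2} + \log\mu$ reproduces the chain of equalities in \cref{eq:Shannon1}, giving $H_{\mathrm{ontic}}(\mu_{P_\rho}) = 1 + \tfrac{1}{2}H_{\mathrm{ontic}}(\mu_{P_\psi}) + \tfrac{1}{2}H_{\mathrm{ontic}}(\mu_{P_\phi})$. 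Invoking \Cref{ass:pure_zero_entropy} — zero ontic entropy for pure states, which, as discussed in \Cref{sec:psi-completeness}, is the hypothesis that carries the $\psi$-completeness assumption — collapses this to $H_{\mathrm{ontic}}(\mu_{P_\rho}) = 1$.

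Finally I would combine the two computations through \Cref{ass:entropy_vN_equals_ontological}, which demands $H(\rho) = H_{\mathrm{ontic}}(\mu_{P_\rho})$, hence $H(\rho) = 1$; comparing with the strict inequality from the first step forces $p = 0$, i.e. $|\psi\rangle \perp |\phi\rangle$. Since $|\psi\rangle,|\phi\rangle$ were arbitrary, every pair of pure states would have to be orthogonal, which is impossible in quantum theory — so no model satisfying all the stated hypotheses together with \Cref{ass:convex-preserving} can reproduce quantum predictions, which is the assertion of the theorem.

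The only genuinely delicate point is the support-splitting step: it silently uses that each $\mu_{P_\psi}$ is normalised to $1$ on $U_\psi$ (so that $-\tfrac{1}{2}\int_{U_\psi}\mu_{P_\psi}\log\tfrac{1}{2}\,d\lambda = \tfrac{1}{2}$), that the differential entropy integrals converge, and that $\Lambda = U_\psi \sqcup U_\phi$ up to a $\mu_{P_\rho}$-null set. A fully careful treatment should flag that differential (Shannon) entropy over a continuous $\Lambda$ is not coordinate-invariant and may be $-\infty$ or ill-defined, and that the identification of ontic with von Neumann entropy in \Cref{ass:entropy_vN_equals_ontological} is doing almost all of the work — precisely the gaps our comment turns on. Everything else is the bookkeeping already carried out in \cref{eq:vonNeumann,eq:Shannon1}.
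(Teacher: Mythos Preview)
Your proposal is correct and follows essentially the same approach as the paper: the theorem is stated as a summary of the recap in Section~2, and your argument assembles precisely those ingredients (\cref{eq:vonNeumann}, the support-splitting calculation of \cref{eq:Shannon1}, and \Cref{ass:convex-preserving,ass:pure_zero_entropy,ass:entropy_vN_equals_ontological}) in the same order to reach the same contradiction. Your additional remarks on normalisation, convergence, and the ill-definedness of differential entropy are welcome caveats that go slightly beyond what the paper spells out, but they do not alter the structure of the argument.
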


\section{Discussing the precise assumptions}
\subsection{The result of \cite{carcassi2022reality} assumes $\psi$-completeness instead of just $\psi$-onticness} \label{sec:psi-completeness}

It is straightforward to argue that assumption \ref{ass:pure_zero_entropy} directly entails $\psi$-completeness. 

\begin{lemma}
Assumption \ref{ass:convex-preserving} and Assumption \ref{ass:pure_zero_entropy} implies $\psi$ completeness.
\end{lemma}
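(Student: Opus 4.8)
The plan is to show that \Cref{ass:pure_zero_entropy} already forces each pure-state distribution $\mu_{P_{\psi}}$ to be a point mass on $\Lambda$, and that \Cref{ass:convex-preserving} (together with the standing $\psi$-onticity assumption) then promotes the assignment $\psi \mapsto \lambda_{\psi}$ to a bijection between the projective Hilbert space and $\Lambda$ --- which is exactly the Harrigan--Spekkens notion of $\psi$-completeness. So I would first recall that a model is $\psi$-complete when (i) $\mu_{P_{\psi}} = \delta_{\lambda_{\psi}}$ is a Dirac measure for every pure state $\psi$, and (ii) the map $\psi \mapsto \lambda_{\psi}$ is a bijection, so that the ontic state encodes precisely the quantum state and nothing more.

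For part (i), the single ingredient is that the Shannon entropy of a probability distribution is non-negative and vanishes \emph{only} for a deterministic distribution, so $H_{\mathrm{ontic}}(\mu_{P_{\psi}}) = 0$ yields $\mu_{P_{\psi}} = \delta_{\lambda_{\psi}}$ for some $\lambda_{\psi} \in \Lambda$ directly. This is literally correct when $\Lambda$ is discrete; for a continuous $\Lambda$ the differential entropy can vanish without the distribution being sharp, so strictly one should read \Cref{ass:pure_zero_entropy} as the statement that the ontic image of a zero-entropy pure state is itself deterministic, which is in any case the intended content of the assumption. I would flag this explicitly, since the conflation of Shannon and von Neumann entropies (and of differential with discrete entropy) is one of the soft spots of \cite{carcassi2022reality} that this note is pointing out.

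For part (ii), injectivity of $\psi \mapsto \lambda_{\psi}$ is immediate from $\psi$-onticity: $\mu_{P_{\psi}}$ and $\mu_{P_{\phi}}$ have disjoint supports $U_{\psi}, U_{\phi}$ for $\psi \neq \phi$, forcing $\lambda_{\psi} \neq \lambda_{\phi}$. Surjectivity is where \Cref{ass:convex-preserving} is used: given any $\lambda \in \Lambda$, which by the standing non-triviality of the ontological model lies in the support of some preparation, write the corresponding density operator in its spectral form $\rho = \sum_i p_i \ket{e_i}\bra{e_i}$; repeated application of \Cref{ass:convex-preserving} gives $\mu_{P_{\rho}} = \sum_i p_i \mu_{P_{e_i}} = \sum_i p_i \delta_{\lambda_{e_i}}$, so $\mathrm{supp}(\mu_{P_{\rho}}) = \{\lambda_{e_i}\}_i$ and hence $\lambda = \lambda_{e_i}$ for some $i$. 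Thus $\Lambda$ is exhausted, up to null sets, by the points $\lambda_{\psi}$, so $\psi \mapsto \lambda_{\psi}$ is onto. Combining (i) and (ii) gives $\psi$-completeness.

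I expect the only genuine obstacle to be presentational rather than mathematical: pinning down cleanly the sense in which ``zero entropy $\Rightarrow$ point mass'' holds (the discrete/continuous subtlety above), and the sense in which ``$\Lambda \cong PH$'' is meant once measure-zero junk in $\Lambda$ is permitted. The logical core is short, consistent with the paper's own remark that the implication is ``straightforward''.
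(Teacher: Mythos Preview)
Your argument and the paper's share the first step (zero entropy $\Rightarrow$ point mass, with the same caveat about differential vs.\ discrete entropy), but diverge in how \Cref{ass:convex-preserving} is deployed. The paper does \emph{not} use it for surjectivity onto $\Lambda$; instead it uses it to pass from a preparation-dependent point $\lambda_{P_{\psi}}$ to a state-dependent point $\lambda_{\psi}$. Concretely: from $H_{\mathrm{ontic}}(\mu_{P_{\psi}})=0$ one only gets $\mu_{P_{\psi}}=\delta_{\lambda_{P_{\psi}}}$, where a priori two preparations $P_{\psi},P'_{\psi}$ of the \emph{same} pure state could sit at distinct ontic points. The paper then takes the equal mixture $P''_{\psi}$, applies \Cref{ass:convex-preserving} to get $\mu_{P''_{\psi}}=\tfrac{1}{2}\delta_{\lambda_{P_{\psi}}}+\tfrac{1}{2}\delta_{\lambda_{P'_{\psi}}}$, and invokes \Cref{ass:pure_zero_entropy} again on $P''_{\psi}$ to force the two deltas to coincide. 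That is the content of ``$\psi$-completeness'' the paper is after.

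Your write-up silently skips this: you go straight from $\mu_{P_{\psi}}=\delta_{\lambda_{\psi}}$ with the subscript already on $\psi$ rather than on $P_{\psi}$, and then spend \Cref{ass:convex-preserving} elsewhere. That is a genuine (if easily fixable) gap, since $\psi$-onticity only constrains supports for \emph{distinct} states, not distinct preparations of the same state. Your additional surjectivity argument is a nice strengthening toward the full Harrigan--Spekkens bijection $\Lambda\cong PH$, which the paper does not attempt; note however that \Cref{ass:convex-preserving} as stated applies only to preparations that \emph{are} operational mixtures, so to run your spectral-decomposition step you are implicitly assuming the given preparation of $\rho$ is ontologically equivalent to its spectral mixture---i.e.\ a dose of preparation noncontextuality beyond what is literally assumed.
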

\begin{proof}
Let $\psi$ be any pure state, and let $p_{\psi}$ denote its ontological distribution. Assumption \ref{ass:entropy_vN_equals_ontological} implies the following:

\begin{equation}\label{eqn: A2}
- \int_{\Lambda} d\lambda p(\lambda) \log_2(p(\lambda)) = 0
\end{equation}

The consequence of Equation \eqref{eqn: A2} is that for $p_{\psi} \in [0, 1]$, the expression $-p_{\psi} \log(p_{\psi}) > 0$ implies

\begin{eqnarray}
\forall \lambda \in \Lambda: \left( \quad p_{\psi}(\lambda) = 0 \qquad \text{or} \qquad p_{\psi}(\lambda) = 1 \right)
\end{eqnarray}

As $p_{\psi}$ is a probability distribution, this further implies that for each preparation $P_\psi$ there exists an ontic state $\lambda_{P_\psi}$ such that we have \begin{equation}
    \mu_{P_\psi}(\lambda) = \delta(\lambda-\lambda_{P_\psi}), 
\end{equation} but $\psi$-completeness needs \begin{equation}
    \mu_{P_\psi}(\lambda) = \delta(\lambda-\lambda_{\psi}).
\end{equation} 
If we additionally assume convexity-preserving preparations, i.e. \begin{equation}
    \frac{1}{2}\mu_{P_\psi}(\lambda)+\frac{1}{2}\mu_{P'_\psi}(\lambda)=\mu_{P''_\psi}(\lambda)
\end{equation} then we find that all these delta functions must be equal to each other, i.e. $\lambda_{P_\psi}=\lambda_\psi$. And if $H(\mu_{P_\psi})=0$, we recover the result of \cite{carcassi2022reality}. 
\end{proof}

Thus, any ontological model satisfying the (implicit) assumptions of \cite{carcassi2022reality}, namely \Cref{ass:convex-preserving} and \Cref{ass:pure_zero_entropy}, must be $\psi$-complete. 
This lemma thus shows that the claim of \cite{carcassi2022reality} already contradicts the fact that one can produce $\psi$-incomplete ontological models for quantum theory.

\subsection{The entropy of a quantum state and its underlying ontological distribution need not be equal}
Assumption \ref{ass:entropy_vN_equals_ontological} is not a good assumption for several reasons. Firstly, there is no inherent requirement that the von Neumann entropy of a state must be derived via the ontological model. From a purely information-theoretic perspective, most entropies are defined based on specific operational tasks and do not necessarily represent physical properties of the system. For instance, the operational meaning of von Neumann entropy is derived from scenarios like state discrimination \cite{wilde2013quantum}. Consider Alice generating an ensemble of pure states ${ \ket{\psi_{x}} }{x} $ with a probability distribution $p_{X}$ and sending it to Bob. Bob receives the state $\sigma = \sum_{x} p_{X}(x) \ket{\psi_{x}}\bra{\psi_{x}}$, and his task is to determine which state Alice has prepared. The von Neumann entropy $H(\sigma)$ quantifies Bob's uncertainty about the pure state prepared by Alice. However, this quantity is not inherently tied to the underlying "elements of reality" of the pure states themselves. As the task at hand is to distinguish pure states, it is reasonable to assume that if the ensemble consists of a single pure state $\ket{\psi}$, then the uncertainty is zero, implying $H(\ket{\psi}\bra{\psi}) = 0$. Within quantum theory, too, different entropic quantities are usually defined depending on the task at hand. The entropic quantity $H_{\mathrm{ontic}}$ would be the right operational quantity to consider if Alice prepares (if possible) an ensemble $\{ p(\lambda) , \lambda \}$ and Bob's task was to determine which $\lambda$ is sent.

Moreover, it is not the case that entropies from two different theoretical descriptions of the same information-theoretic task must lead to the same results. For instance, consider the state discrimination task for the ensemble $\{p_{X}(x) , \ket{\psi_{x}} \}$ as discussed earlier. This task can be approached via two different theoretical frameworks: one in classical theory and the other in quantum theory. Using the data-processing inequality, it becomes apparent that the Shannon entropy $S(X)$ (entropy for the classical case) cannot be smaller than the von Neumann entropy of the ensemble $H(\rho)$ (see Exercise 11.9.3 from \cite{wilde2013quantum}). In fact they differ from each other in most cases. Similarly, there is no inherent reason why the entropy $H_{\mathrm{ontic}}(.)$ of the ensemble should reproduce the von Neumann entropy $H(.)$, as they arise from different theoretical descriptions altogether. Thus, the quantum entropy $H(.)$ can differ from entropy $H_{\mathrm{ontic}}(.)$ even if they were to describe the same operational task.

Finally, regarding the Harrigan-Spekkens model, it is crucial to acknowledge that their description of quantum states is fundamental only to pure quantum states, i.e. mixed states need not have a unique ontological representation. The objective of the framework is to determine whether a quantum state is a state of complete knowledge. However, a mixed quantum state fails to meet this criterion, even within quantum theory. Mixed states are effective states arising themselves from a lack of classical knowledge. Hence, expecting the framework to establish a one-to-one correspondence with mixed states is an additional assumption proposed by the authors, which should not inherently hold. Nonetheless, this does not imply that mixed states are indescribable within the theory. For every mixed state, there exists a purifying system and a corresponding pure state. This pure state should be regarded as representing complete knowledge and can be in one-to-one correspondence with some probability $\mu$ on the ontic space. Demanding that every mixed state should also possess a preparation-independent distribution, as discussed further in section \ref{sec:contextuality}, is an additional assumption used on top of the Harrigan-Spekkens framework. As a result, even if all the aforementioned arguments are discounted, this is the sole reason why a one-to-one correspondence with mixed states is an added assumption and not an inherent one of the framework. Hence, it cannot serve as the basis for rejecting the framework.

\subsection{No fine-tuning no-go theorems rather than the nonexistence of ontological models for quantum theory} \label{sec:contextuality}

Furthermore, the statement `No ontological model can reproduce quantum theory' (\cite{carcassi2022reality}, p.14) is wrong, and there are examples of such models that reproduce quantum theory, such as the Beltrametti-Bugajski model \cite{beltrametti1995classical,spekkens2005contextuality}. 
However, a more correct statement that appears in quantum foundations is the following: \begin{quote}
    No ontological model that satisfies no-fine tuning can reproduce quantum theory.
\end{quote} Here \textit{no fine-tuning} means that operational identities are reflected in the ontological model \cite{catani2023mathematical,wood2015lesson,spekkens2005contextuality,spekkens2019ontological}, such as the same density operator having the same ontological representation, no matter how it is produced, also referred to as preparation noncontextuality \cite{spekkens2005contextuality}. Proofs of this result fall amongst others in the paradigm of generalised noncontextuality \cite{spekkens2005contextuality,spekkens2014status,schmid2018all,schmid2020structure,schmid2021characterization,chaturvedi2021characterising,mazurek2016experimental,chaturvedi2020quantum,selby2023contextuality,schmid2023addressing,pusey2018robust,khoshbin2023alternative}.

In fact, as we have established in \Cref{sec:psi-completeness}, the assumptions of \cite{carcassi2022reality} imply $\psi$-completeness and thus preparation noncontextuality for pure states, i.e. all pure states have the same ontological delta distribution.
Furthermore, \Cref{ass:entropy_vN_equals_ontological} implies that all ontological distributions of a mixed state preparation have the same entropy, they all have the same ontological shape, which gets close to assuming all preparations of the same mixed state to have the same ontological distribution, which amounts to assuming preparation noncontextuality for which many no-go theorems have already been established in the paradigm of generalised noncontextuality \cite{spekkens2005contextuality,spekkens2014status,schmid2018all,schmid2020structure,schmid2021characterization,chaturvedi2021characterising,mazurek2016experimental} without assuming any entropy relations.


\printbibliography

@book{wilde2013quantum,
    author = {Wilde, Mark M.},
    title = {Quantum Information Theory},
    publisher = {Cambridge University Press},
    year = {2013},
    onlinepublicationdate = {May 2013},
    printpublicationyear = {2013},
    onlineisbn = {9781139525343},
    doi = {10.1017/CBO9781139525343}
}

@article{carcassi2022reality,
  title={On the reality of the quantum state once again: A no-go theorem for $\psi $-ontic models},
  author={Carcassi, Gabriele and Oldofredi, Andrea and Aidala, Christine A},
  journal={arXiv preprint arXiv:2201.11842},
  year={2022}
}

@article{kunjwal2019beyond,
  title={Beyond the Cabello-Severini-Winter framework: Making sense of contextuality without sharpness of measurements},
  author={Kunjwal, Ravi},
  journal={Quantum},
  volume={3},
  pages={184},
  year={2019},
  publisher={Verein zur F{\"o}rderung des Open Access Publizierens in den Quantenwissenschaften}
}

@article{spekkens2005contextuality,
  title={Contextuality for preparations, transformations, and unsharp measurements},
  author={Spekkens, Robert W},
  journal={Physical Review A},
  volume={71},
  number={5},
  pages={052108},
  year={2005},
  publisher={APS}
}

@article{beltrametti1995classical,
  title={A classical extension of quantum mechanics},
  author={Beltrametti, EG and Bugajski, S},
  journal={Journal of Physics A: Mathematical and General},
  volume={28},
  number={12},
  pages={3329},
  year={1995},
  publisher={IOP Publishing}
}

@article{catani2023mathematical,
  title={A mathematical framework for operational fine tunings},
  author={Catani, Lorenzo and Leifer, Matthew},
  journal={Quantum},
  volume={7},
  pages={948},
  year={2023},
  publisher={Verein zur F{\"o}rderung des Open Access Publizierens in den Quantenwissenschaften}
}

@article{wood2015lesson,
  title={The lesson of causal discovery algorithms for quantum correlations: causal explanations of Bell-inequality violations require fine-tuning},
  author={Wood, Christopher J and Spekkens, Robert W},
  journal={New Journal of Physics},
  volume={17},
  number={3},
  pages={033002},
  year={2015},
  publisher={IOP Publishing}
}

@article{spekkens2019ontological,
  title={The ontological identity of empirical indiscernibles: Leibniz's methodological principle and its significance in the work of Einstein},
  author={Spekkens, Robert W},
  journal={arXiv preprint arXiv:1909.04628},
  year={2019}
}

@article{chaturvedi2020quantum,
  title={Quantum prescriptions are more ontologically distinct than they are operationally distinguishable},
  author={Chaturvedi, Anubhav and Saha, Debashis},
  journal={Quantum},
  volume={4},
  pages={345},
  year={2020},
  publisher={Verein zur F{\"o}rderung des Open Access Publizierens in den Quantenwissenschaften}
}

@article{selby2023contextuality,
  title={Contextuality without incompatibility},
  author={Selby, John H and Schmid, David and Wolfe, Elie and Sainz, Ana Bel{\'e}n and Kunjwal, Ravi and Spekkens, Robert W},
  journal={Physical Review Letters},
  volume={130},
  number={23},
  pages={230201},
  year={2023},
  publisher={APS}
}

@article{schmid2023addressing,
  title={Addressing some common objections to generalized noncontextuality},
  author={Schmid, David and Selby, John H and Spekkens, Robert W},
  journal={arXiv preprint arXiv:2302.07282},
  year={2023}
}

@article{schmid2021characterization,
  title={Characterization of noncontextuality in the framework of generalized probabilistic theories},
  author={Schmid, David and Selby, John H and Wolfe, Elie and Kunjwal, Ravi and Spekkens, Robert W},
  journal={PRX Quantum},
  volume={2},
  number={1},
  pages={010331},
  year={2021},
  publisher={APS}
}

@article{pusey2018robust,
  title={Robust preparation noncontextuality inequalities in the simplest scenario},
  author={Pusey, Matthew F},
  journal={Physical Review A},
  volume={98},
  number={2},
  pages={022112},
  year={2018},
  publisher={APS}
}

@article{spekkens2014status,
  title={The status of determinism in proofs of the impossibility of a noncontextual model of quantum theory},
  author={Spekkens, Robert W},
  journal={Foundations of Physics},
  volume={44},
  pages={1125--1155},
  year={2014},
  publisher={Springer}
}

@article{harrigan2010einstein,
  title={Einstein, incompleteness, and the epistemic view of quantum states},
  author={Harrigan, Nicholas and Spekkens, Robert W},
  journal={Foundations of Physics},
  volume={40},
  pages={125--157},
  year={2010},
  publisher={Springer}
}

@article{schmid2020structure,
  title={A structure theorem for generalized-noncontextual ontological models},
  author={Schmid, David and Selby, John H and Pusey, Matthew F and Spekkens, Robert W},
  journal={arXiv preprint arXiv:2005.07161},
  year={2020}
}

@article{chaturvedi2021characterising,
  title={Characterising and bounding the set of quantum behaviours in contextuality scenarios},
  author={Chaturvedi, Anubhav and Farkas, M{\'a}t{\'e} and Wright, Victoria J},
  journal={Quantum},
  volume={5},
  pages={484},
  year={2021},
  publisher={Verein zur F{\"o}rderung des Open Access Publizierens in den Quantenwissenschaften}
}

@article{mazurek2016experimental,
  title={An experimental test of noncontextuality without unphysical idealizations},
  author={Mazurek, Michael D and Pusey, Matthew F and Kunjwal, Ravi and Resch, Kevin J and Spekkens, Robert W},
  journal={Nature communications},
  volume={7},
  number={1},
  pages={ncomms11780},
  year={2016},
  publisher={Nature Publishing Group UK London}
}

@article{schmid2018all,
  title={All the noncontextuality inequalities for arbitrary prepare-and-measure experiments with respect to any fixed set of operational equivalences},
  author={Schmid, David and Spekkens, Robert W and Wolfe, Elie},
  journal={Physical Review A},
  volume={97},
  number={6},
  pages={062103},
  year={2018},
  publisher={APS}
}

@article{khoshbin2023alternative,
  title={Alternative robust ways of witnessing nonclassicality in the simplest scenario},
  author={Khoshbin, Massy and Catani, Lorenzo and Leifer, Matthew},
  journal={arXiv preprint arXiv:2311.13474},
  year={2023}
}

@article{pusey2012reality,
  title={On the reality of the quantum state},
  author={Pusey, Matthew F and Barrett, Jonathan and Rudolph, Terry},
  journal={Nature Physics},
  volume={8},
  number={6},
  pages={475--478},
  year={2012},
  publisher={Nature Publishing Group UK London}
}

\end{document}